\tikzstyle{container} = [draw, rectangle, inner sep=1.5cm]
  \tikzset{main node/.style={circle,draw,minimum size=1cm,inner sep=0pt},}
\newif\ifuseboldmathops
\newif\ifuseittextabbrevs
	\newcommand{\ie}{{\it i.e.}}
	\newcommand{\ie}{i.e.}
\newcommand{\Always}{\Box \, }
\newcommand{\Eventually}{\Diamond \, }
\newcommand{\until}{\mbox{$\, {\sf U}\,$}}
\newcommand{\calAP}{\mathcal{AP}}
\newcommand{\win}{\mathsf{Win}} 
\acrodef{mdp}[MDP]{Markov Decision Process}
\acrodef{pomdp}[POMDP]{Partially Observable Markov Decision Process}
\newcommand{\hgame}{\mathcal{HG}}
\newcommand{\calA}{\mathcal{A}}
\newcommand{\game}{\mathcal{G}}
\acrodef{dfa}[DFA]{Deterministic Finite-State Automaton}
\acrodef{scltl}[scLTL]{syntactically co-safe LTL}
\acrodef{ltl}[LTL]{Linear Temporal Logic}
\acrodef{vm}[VM]{Virtual Machine}
\newcommand{\servs}{\mathsf{Servs}}
\begin{document}


\chapter[]{
A Theory of Hypergames on Graphs for Synthesizing Dynamic Cyber Defense with Deception}

\author[1]{Abhishek N. Kulkarni}
\author*[2]{Jie Fu}

\address[1]{\orgdiv{Robotics Engineering Program}, 
\orgname{Worcester Polytechnic Institute}, 

     \street{100 Institute Rd.}, \city{Worcester, MA},  \country{US}.}%

\address[2]{\orgdiv{Dept. of Electrical and Computer Engineering, Robotics Engineering Program}, 
\orgname{Worcester Polytechnic Institute}, 
      \street{100 Institute Rd.}, \city{Worcester, MA},\country{US}}%

\address*{Corresponding Author: Jie Fu; \email{jfu2@wpi.edu}}

\maketitle
 
\begin{abstract}{Abstract}
In this chapter, we present an approach using formal methods to   synthesize reactive defense strategy in a cyber network, equipped with a set of decoy systems. We first generalize formal graphical security models--attack graphs--to incorporate defender's countermeasures in a game-theoretic model, called an attack-defend game on graph. This game captures  the dynamic interactions between the defender and the attacker and their defense/attack objectives in formal logic. Then, we  introduce a class of hypergames to model asymmetric information created by decoys in the attacker-defender interactions. Given qualitative security specifications in formal logic, we show that the solution concepts from hypergames and reactive synthesis in formal methods can be extended to synthesize effective dynamic defense strategy using cyber deception. The strategy takes the advantages of the misperception of the attacker to ensure security specification is satisfied, which may not be satisfiable when the information is symmetric.
 \end{abstract}

\keywords{Attack Graphs, Hypergame, Formal Methods}

\section{Introduction}

Cyber deception is a key technique in  network defense. With cyber deception, the defender creates uncertainties and unknowns for the attacker. By doing so, the attacker’s strategy in exploiting the system becomes less effective, thus, resulting in improved security and safety of the network. In this chapter, we investigate a formal methods approach for synthesizing defensive strategies in cyber network systems with cyber deception. We employ formal security specifications to  express a rich class of desired properties. For example, a defender may need to satisfy a safety property in terms of preventing the attacker from reaching critical data server. He may also need to satisfy a liveness property stating that a service should eventually  be provided to the user after being made temporarily unavailable.
Given formal  security specifications,  formal synthesis is to compute a defense strategy, if exists, with which the defender can provably satisfy his specification against all possible actions from the attacker.

Formal methods have been employed to verify the security of network systems.  Formal graphical security models such as attack graphs \citep{Jha2002Two} and attack trees \citep{schneierbruceAttackTrees2007} are used in model-based verification of system security.  An attack graph captures multiple paths that an attacker can carry out by exploiting vulnerabilities and their dependencies in a network to reach the attack goal. Given an attack graph, the formal security specification can be verified using model checking algorithms for transition systems \citep{baierPrinciplesModelChecking2008}. 
An attack tree builds a tree structure that describes how the attacker can achieve his goal by achieving a  set of subgoals. The root of the tree is the main attack goal and the leaves of the tree are elementary attack subgoals. The internal tree nodes shows the logical dependency between subgoals at different levels of the tree.
To incorporate defender's counter-measures, attack-defense trees \citep{kordy2010foundations,kordyDAGbasedAttackDefense2014} are proposed to capture the dependencies between actions and subgoals for both attacker and defender. These models are used in verifying quantitative security properties in temporal logic \citep{aslanyanQuantitativeVerificationSynthesis2016a,hansen2017quantitative,kordyQuantitativeAnalysisAttack2018}. The major limitation of attack trees is that it does not characterize network status changes under the attack actions and thus  may fail to generate some attack scenarios. It is also noted that these formal graphical models do not capture the asymmetric information between the attacker and the defender due to cyber deception. Specifically, these models assume both defender and attacker knows the game they are playing, while as with cyberdeception, the defender intentionally introduces incorrect or uncertain information about the game to the attacker.


Active deception \citep{Jajodia2016Cyber} employs decoy systems and other defenses, including access control and online network reconfiguration, to conduct deceptive planning against the intrusion of malicious users who have been detected and confirmed by sensing systems. To design defense strategies with deception,
game theory has been employed \citep{Hor2012Manipulating,Huang0Dynamic, Horak2019Compact,Cohen2006Use,Zhu2018On}. These game-theoretic models express the attacker and defender’s objectives using reward/loss functions. In \citep{Horak2019Compact}, a partially observable stochastic game is formulated to capture the interaction between an attacker and a defender with one-sided partial observations. The attacker is to exploit and compromise the system without being detected and has complete observation. The defender is to detect the attacker and reconfigure the honeypots.  \citet{Hor2012Manipulating} consider the case when the attacker has incomplete information and forms a belief about the defender’s unit. Players employ Bayesian rules to update the belief about the state in the game. Leveraging the attacker’s incomplete information, the defender may mislead the attacker’s belief and thus his actions to minimize the damage to the network measured by a state-dependent loss function. However, reward and loss functions are not expressive enough to capture more complex qualitative defense/attack objectives studied in attack graph, such as safety and temporally extended attack goals. These objectives can be captured succinctly using temporal logic \citep{mannaTemporalLogicReactive1992}. When formal specification is used in specifying defense objectives, there is a lack of formal synthesis methods which employ cyber deception to  ensure the security goals are met.

We study the problem of formal synthesis of secured network systems with active cyber deception. We view the interactions between the defender and the attacker as a two-player game played on a finite graph. Combining the game graph abstraction with the logical security specifications, we construct a model of an attack-defend game as a game on a graph with temporal logic objectives \citep{Pnueli1989On,Chatterjee2012survey}. This game includes both the controllable and uncontrollable actions to represent the actions and exploits by the defender and the attacker, respectively. 

In such a game between a defender and an attacker,  the attacker plays with incomplete information, if he does not know the locations of honeypots. Furthermore, if the attacker mistakes a honeypot as a critical host, then we say that he has a misperception about the game. We extend the theory of hypergame to reason about the asymmetric incomplete information between players and to enable synthesis of deceptive strategies. A hypergame \citep{Bennett1980Hypergames,Vane2000Using,Kovach2015Hypergame} is a game of perceptual games, \ie, games perceived by individual players given the information available to them and the higher-order information known to them, \ie, what the player knows about the information known to the opponent. 
Based on the hypergame modeling, the key questions are: how will the attacker carry out his attack mission, given his incomplete or incorrect information? And, how to synthesize effective defense strategies, which leverage the defender’s private information to ensure that the defender's logical security specifications are satisfied?

Our insight is that deception with honeypots can create a misperception about the labeling function of the attack-defend game. A labeling function relates an outcome—a sequence of states in the game graph—to the properties specified in logic. When honeypots are introduced, an attacker might mislabel a honeypot as a critical host and pursue to reach it. Under this formulation, our main algorithmic contribution is the solution of hypergames under labeling misperception and linear temporal logic objectives. Our solution approach includes two steps: The first step is to synthesize the rational attack strategy perceived by the attacker using solutions of omega-regular games \citep{Chatterjee2012survey,Zielonka1998Infinite}. The synthesized strategy serves as a predictive model of the attacker’s rational behavior, which is then used to refine the original game graph to eliminate actions perceived to be irrational from the attacker’s perspective. In the second step, a level-2 hypergame is solved, yielding a deceptive defense strategy, if one exists, that ensures a specification is  satisfied with probability one, given the misperception of the attacker.  A case study is employed to illustrate how to apply the game-theoretic reasoning to synthesize  deceptive strategies.

We structure the remainder of the chapter so as to provide rigorous mathematical treatment of the topic for a reader familiar with formal methods, and support it with elaborate descriptions, discussions and examples to illustrate our approach to a reader new to the area.

\section{Attack-Defend Games on Graph}



In this section, we introduce a model, called \textit{An Attack-Defend (AD) Game on a Graph}, that augments the attack graph model with the defense actions available to the defender. Our AD game on graph model resembles the game on graph model, which is commonly used in reactive synthesis \citep{Pnueli1989On}. 


Formally, an AD game on graph can be written as a tuple $\game = \langle G, \varphi \rangle$ where the two main components are (i) $G$: a game arena, and (ii) $\varphi$: the Boolean payoff function (\ac{ltl} specification) of the defender. Let us understand each of the component in more detail.



\subsection{Game Arena}
A game arena is a transition system with labels assigned to the states. It captures different configurations of the network and the actions that the attacker and the defender may use to change the current configuration. A configuration of system is a set of state variables that jointly define the current state of the system. For instance, a state variable may be a collection of the current host compromised by the attacker, IP addresses of different hosts over the network, an enumeration of services running over each host, or a list of users currently accessing the hosts with their privileges (root, user, none).  Suppose that there are $n$ state variables and we denote the $i$-th state variable as $X_i$, then the domain of a state-space can be given by $S = X_1 \times X_2 \times \ldots \times X_n$. 
Given this notion of state, we formally define a game arena as follows:



\begin{definition}[Arena]
\label{def:arena}
A turn-based, deterministic game arena between two players P1 (defender, pronoun ``he") and P2 (attacker, pronoun ``she") is a tuple 
\[ G = \langle S,A, T , \calAP,  L \rangle,\] whose components are defined as follows:
\begin{itemize}
\item $S = S_1 \cup S_2$ is a finite set of  states partitioned into two sets $S_1$ and $S_2$. At a state  in $S_1$, P1 chooses an action and at a state  in $S_2$, P2 selects an action.

\item $A = A_1 \cup A_2$ is the set of actions. $A_1$ (resp., $A_2$) is the set of actions for P1 (resp., P2); 
\item $T : (S_1 \times A_1)\cup (S_2 \times A_2) \rightarrow S$ is a \emph{deterministic} transition function that maps  a state-action pair to a next state.
\item $\calAP$ is the set of atomic propositions.
\item $L: S\rightarrow 2^\calAP$ is the labeling function that maps
  each state $s\in S$ to a set $L(s)\subseteq \calAP$ of atomic
  propositions that evaluate to true at that state.
 \end{itemize}
\end{definition}
The last two components of the game arena are related to the security specifications.

We discuss an  example   to illustrate the above concept. 

\begin{example}
\label{ex:4-hosts}
\begin{figure}
    \centering
    \includegraphics[width=0.4\textwidth]{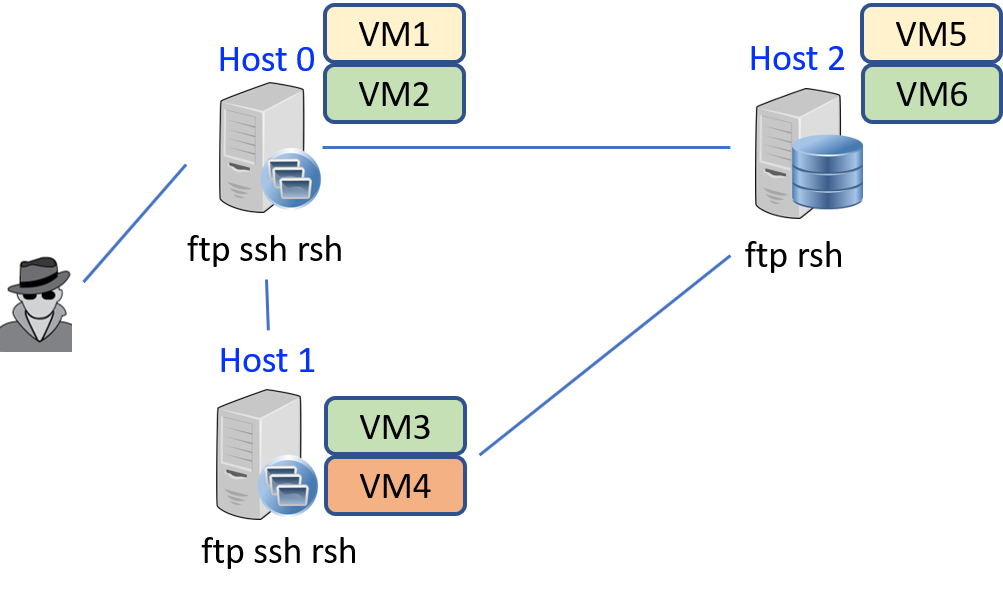}
    \caption{Configuration of the  system.}
    \label{fig:ex_network}
\end{figure}

Consider the system as shown in Fig.~\ref{fig:ex_network} consisting of three hosts with platform diversity. Each host can hold up to two \ac{vm}s with different operation systems and services. For a fixed configuration of \ac{vm}s, a fragment of the attack graph can be generated based on the set of known vulnerabilities, as shown in Fig.~\ref{fig:ex_attackgraph}. 

\begin{figure}[ht]
    \centering 
        \begin{subfigure}[t]{ \textwidth}
        \includegraphics[width=\textwidth]{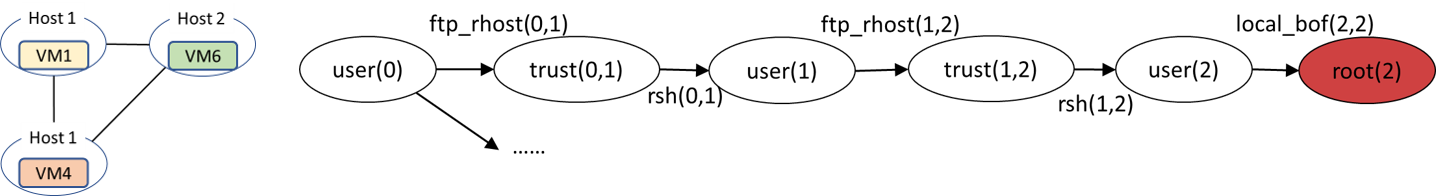}
            \caption{A fragment of the attack graph corresponding to one fixed configuration.
     \label{fig:ex_attackgraph}}
\end{subfigure}
        \begin{subfigure}[t]{0.6\textwidth}
        \includegraphics[width=\textwidth]{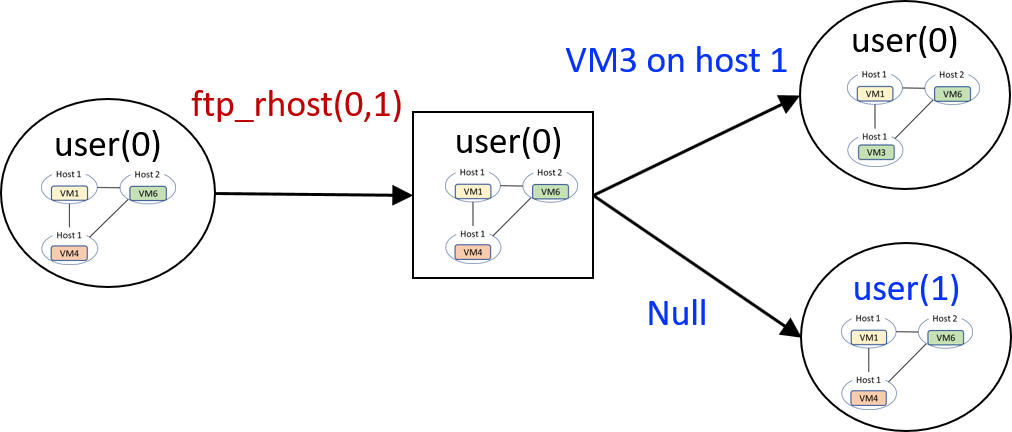}
            \caption{(A fragment of) the attack-defend game arena. 
     \label{fig:ex-game}}
\end{subfigure} 
\caption{The comparison between the attack graph and the attack-defend game arena.}
\label{fig:compare}
 \end{figure}
  In reactive defense, the defender has detected the attacker is in host 0 and can exploit the vulnerability to gain user access to host 1. In that case, the defender can change the platform in host 1 to be VM3, for which the attack action will not be effective. The interaction is then captured in the game on graph, shown in Fig.~\ref{fig:compare}. The action of the defender can also be stopping or running  a service at the managed endpoints, which are omitted. We distinguish the set of states into square states at which the defender makes a move and circle states at which the attacker makes a move. 
 
\end{example}


\subsection{Specifying the security properties in \ac{ltl}}
We consider qualitative formal specifications for defender and attacker objectives. Different from quantitative utility functions in terms of costs, qualitative logic formulas capture hard security constraints that the network defense system must satisfy.


The defender has two types of goals, namely (i) operational objectives; such as the services should eventually be available to the legitimate users, and (ii) defense objectives; such as the attacker should never be able to compromise servers with sensitive information. However, the intention of attacker is often unknown. Thus, we consider the worst-case scenario where the attacker's objective is to violate the security goal of the defender. 


We choose to express the security goal of the defender using \ac{ltl} \citep{mannaTemporalLogicReactive1992}. \ac{ltl} allows us to express the security properties of system with respect to time. We shall now present the formal syntax and semantics of \ac{ltl} and then discuss several examples.

Let $\calAP$ be a set of atomic propositions. Linear Temporal Logic
(\ac{ltl}) has the following syntax,
\[ \varphi := \top \mid \bot \mid p \mid \varphi \mid \neg\varphi \mid \varphi_1 \land \varphi_2 \mid \bigcirc \varphi \mid \varphi_1 {\until} \varphi_2, \] where 
\begin{itemize}
    \item $\top,\bot$ represent universally true and false, respectively.
    \item $p \in \calAP$ is an atomic proposition.
    \item $\bigcirc$  is a temporal operator called the ``next'' operator (see semantics below).
    \item $\until$ is a temporal operator called the ``until'' operator (see semantics below).
    \end{itemize}
    

Let $\Sigma\coloneqq 2^\calAP$ be the finite alphabet. 
Given a word $w\in \Sigma^\omega$, let $w[i]$ be the $i$-th element in the word and $w[i\ldots]$ be the subsequence of $w$ starting from the $i$-th element. For example, $w=abc$, $w[0]=a$ and $w[1\ldots] = bc$. Formally,
we have the following definition of the semantics: 
\begin{itemize}
    \item $w\models p$ if $p \in w[0]$;
    \item $w\models \neg p $ if $p \notin w[0]$;
    \item $w\models \varphi_1\land \varphi_2$ if $w \models \varphi_1$ and $w\models \varphi_2$.
    \item $w\models \bigcirc \varphi$ if $w[1\ldots] \models \varphi$.
    \item $w\models \varphi \until \psi$ if $\exists i \ge 0$, $w[i\ldots] \models \psi$ and $\forall 0\le j<i$, $w[j\ldots]\models \varphi$.  
\end{itemize}
From these two temporal operators ($\bigcirc, \until$), we define two
additional temporal operators: $\Eventually$ ``eventually'' and $\Always$
``always''. Formally, $\Eventually \varphi = \top \until
\varphi$ and $\Always \varphi = \neg \Eventually \neg \varphi$. For details about the syntax and semantics of \ac{ltl}, the readers are referred to \citep{mannaTemporalLogicReactive1992}.

Here we present some examples. Suppose $root(2)$ is an atomic proposition that the attacker has root privilege on host 2, then a safety property that attacker never has root privilege on host 2 can be written in \ac{ltl} as a formula $\varphi_1 = \Always \neg root(2)$, which is read as ``proposition $root(2)$ is \textit{always} false." Similarly, a property that the attacker first gains a user privilege on host 1  and then a root privilege on host 2 can be expressed using an \ac{ltl} formula $\varphi_2 = \Eventually(user(1) \land \Eventually root(2))$, which is read as ``\textit{eventually} proposition $user(1)$  becomes true and then the proposition $root(2)$ becomes true." In general, it is also possible to express properties such as recurrence (some event occurs infinitely often) or persistence (some property eventually becomes true, and remains true thereafter) using \ac{ltl}. However, in this chapter, we restrict ourselves to a sub-class of \ac{ltl} called \ac{scltl} \citep{kupferman2001model}. Using \ac{scltl} we can reason about the reachability and safety\footnote{Safety and reachability are dual problems. Hence, reasoning about the safety objectives can be done by reasoning about the dual reachability problem.} properties.

This concludes a brief introduction to the concept of AD games on graphs; which do not model the asymmetric incomplete information available with the players. In the next section, we extend the notion of hypergames that incorporate the different perceptions that players may have due to incomplete information available to them.


\section{Hypergames on Graphs}

A hypergame models the situation where different players perceive their interaction with other players differently, and consequently play different games in their own minds depending on their perception. We consider the case where the difference in perception arises because of incomplete and potentially incorrect information. For instance, suppose a subset of nodes in the network are honeypots, the attacker may mistake these to be true hosts. We formulate a hypergame to model the interaction between the defender and the attacker given asymmetric information.
 
First, let's review the definition of hypergames.
\begin{definition}[Hypergame \citep{Bennett1980Hypergames, Vane2000Using}]
Given two players, a game perceived by player $1$ is denoted by $\game_1$, and a game perceived by player $2$ is denoted by $\game_2$. A level-1 hypergame is defined as a tuple 
\[ 
    \hgame^1 = \langle \game_1, \game_2 \rangle, 
\] 
In a level-1 hypergame, none of the player's is aware of other player's perception. 

When one player becomes aware of the other player's (mis)perception, the interaction is captured by a level-2 two-player hypergame, defined as a tuple,
\[
    \hgame^2= \langle \hgame^1, \game_2 \rangle.
\] 
where P1 perceives the interaction as a level-1 hypergame (as P1 is aware of P2's game $\game_2$ in addition to his own) and P2 perceives the interaction as the game $\game_2$. 
\end{definition}

We refer to the games $\game_1$ (resp., $\game_2$) as P1's (resp., P2's) perceptual game in level-1 hypergame, and $\hgame^1$ as P1's perceptual game in level-2 hypergame. As P2 is not aware that she might be misperceiving the game, her perceptual game in level-2 hypergame is still $\game_2$.

In general, if P1 computes his strategy
by solving an $(m-1)$-th level hypergame and P2 computes her strategy using
an $n$-th level hypergame with  $n < m$, then the resulting hypergame is said
to be a level-$m$ hypergame given as
\[
    \hgame^m = \langle \hgame^{m-1}_1, \hgame^n_2 \rangle. 
\]
 
Next, we show that by introducing honeypots, the attacker's perceptual game deviates from the actual game. This mismatch occurs in the labeling function. 
Recall that a labeling function $L$ assigns every state in the game arena with a subset of atomic propositions that are true at that state. 
Let us consider a   network with decoys where attacker is not aware of which hosts are decoys. Suppose $p$ is a proposition that a host $h$ is a decoy. Then, defender's labeling function, say $L_1$, labels $h$ correctly as a decoy. However, the attacker's labeling function, say $L_2$, will incorrectly label $h$ as a regular host. Given a path $\rho \in S^\ast$ in the game arena, this path may satisfy the security specification as $L_1(\rho)\models \varphi$, in which case the defender obtains payoff $1$ and the attacker obtains payoff $0$. However, due to misperception in the labeling, the attacker may have $L_2(\rho) \models \neg \varphi$ and thus have a misperception of the payoff of the path. We capture this misperception and asymmetric information using the new class of hypergames, defined as follows:



\begin{definition}[A Hypergame on a Graph with One-sided Misperception of Labeling Function] 
\label{def:one-sided-misperception}
Let $G_1 = \langle S, A, T, \calAP, L_1 \rangle$ be the game arena as constructed by P1. Similarly, let $G_2 = \langle S, A, T, \calAP, L_2 \rangle$ be a game arena as constructed by P2 based on her perception. Let $\varphi$ be the defense objective of P1. Then, we construct two games $\game_1 = \langle G_1, \varphi \rangle$ and $\game_2 = \langle G_2, \varphi \rangle$. When P1 is aware of P2's misperception, \ie~ P1 knows $L_2$ and, therefore, $\game_2$, we have the model of their interaction as a hypergame of level-2,
\[
    \hgame^2 = \langle \hgame^1, \game_2 \rangle,
\]
where $\hgame^1 = \langle \game_1, \game_2 \rangle$ is a hypergame of level-1 and is P1's perceptual game. P2's perceptual game is $\game_2$. We say $\hgame^2$ to be a hypergame  on  a graph with one-sided misperception when the labeling function of P1 coincides with the ground-truth labeling function, \ie~ $L_1 = L$. 

\end{definition}


\section{Synthesis of Provably-Secure Defense Strategies using Hypergames on Graphs}

Given the hypergame model, we present a solution approach to automatically synthesize a strategy for defender such that, for every possible action of attacker, the strategy ensures that the security goals  (\ie, $\varphi$) of defender are satisfied. In order to understand the synthesis approach for hypergame, we first look at the conventional solution approach used for a game on graph \citep{mcnaughton1993infinite,Zielonka1998Infinite}.

\subsection{Synthesis of Reactive Defense Strategies} 

Recall that in an AD game on graph model; $\game = \langle G, \varphi \rangle$, we assume that the information available to both players is complete and symmetric. Under this assumption, the solution for game on graph can be computed by constructing a game transition system and then using an algorithm to identify the winning regions for the attacker and the defender.

Before we introduce the game transition system, let us visit the equivalence of an \ac{scltl} specification with a \ac{dfa}. 

\begin{definition}[Specification \ac{dfa}]
A \ac{dfa} is a tuple,
\[
    \calA = \langle Q, \Sigma, \delta, I, F \rangle, 
\]
where 
\begin{itemize}
    \item $Q$ is a finite set of \ac{dfa} states.
    \item $\Sigma = 2^\calAP$ is an alphabet.
    \item $\delta: Q \times \Sigma \rightarrow Q$ is a deterministic transition function. The transition function can be extended recursively as: $\delta(q, uv) = \delta(\delta(q,u),v)$ for some $u,v\in \Sigma^\ast$.
    \item $I \in Q$ is a unique initial state.
    \item $F \subseteq Q$ is a set of final states.
\end{itemize}
A word $w=  \sigma_0\sigma_1\ldots \sigma_n$ is accepted by the \ac{dfa} if and only if $\delta(q_0,w)\in F$. 
Given an \ac{scltl} specification $\varphi$, a \ac{dfa} $\calA$ is called a specification \ac{dfa} when  every word $w$ defined over the alphabet $\Sigma$ that satisfies $w \models \varphi$ is accepted by \ac{dfa} $\calA$.
\end{definition}

Using this notion of equivalence between a \ac{scltl} formula and \ac{dfa}, we define the game transition system as follows:

\begin{definition}[Game Transition System]
Let $\calA = \langle Q, \Sigma, \delta, I, F \rangle$ be a \ac{dfa} equivalent to the specification $\varphi$. Then, given $\game = \langle G, \varphi \rangle$, the game transition system, represented as $G \otimes \calA$, is the following tuple: 
\[
G \otimes \calA = \langle S \times Q, A, \Delta, (s_0, q_0), S \times F \rangle,
\]
where 
\begin{itemize}
    \item $S\times Q$ is a set of states partitioned into P1's states $S_1 \times Q$ and P2's states $S_2 \times Q$;
    \item $A = A_1 \cup A_2$ is the same set of actions as labeled transition system $G$;
    \item $\Delta: (S_1 \times Q \times A_1) \cup (S_2 \times Q \times A_2) \rightarrow S \times Q$ is a \textit{deterministic} transition function that maps a game  state  $(s,q) $ and an action $a$ to a next state $(s',q')$ where $s' = T(s,a,s')$ and $q' = \delta(q, L(s'))$.  
    \item $(s_0, q_0) \in S \times Q$ where $q_0 = \delta(I, L(s_0))$ is an initial state of the game transition system; and 
    \item $S \times F \subseteq S \times Q$ is a set of final states. 
\end{itemize}

\end{definition}




The following theorem is a well-known result in game theory \citep{mcnaughton1993infinite,Zielonka1998Infinite}.

\begin{theorem}[Determinacy of Game on Graph] \label{thm:determinancy} 
All two-player zero-sum deterministic turn-based games on graph are determined. 
\end{theorem}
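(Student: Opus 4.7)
The plan is to reduce the statement to determinacy of finite two-player zero-sum turn-based reachability games played on the product game transition system $G \otimes \calA$, and then prove the latter by an attractor/fixed-point construction. Since the objective $\varphi$ is an \ac{scltl} formula, the specification \ac{dfa} $\calA$ can be taken so that acceptance is witnessed by visiting $F$ at least once, so satisfaction of $\varphi$ in $\game$ corresponds exactly to reaching the target set $S \times F$ in $G \otimes \calA$. Thus determinacy of $\game$ will follow once I show that, in any finite turn-based reachability game, the state space partitions into a P1-winning region and a P2-winning region.

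First I would define the controllable predecessor operator on subsets $X \subseteq S \times Q$ by
\[
\pre_1(X) = \{(s,q) \in S_1 \times Q \mid \exists a \in A_1,\; \Delta((s,q),a) \in X\} \cup \{(s,q) \in S_2 \times Q \mid \forall a \in A_2,\; \Delta((s,q),a) \in X\},
\]
and iterate $X_0 = S \times F$, $X_{k+1} = X_k \cup \pre_1(X_k)$. Because the arena is finite, this chain stabilizes at some least fixed point $W_1 = \bigcup_k X_k$, which I claim is P1's winning region: from any state first entering the chain at level $k$, P1's memoryless strategy of moving toward $X_{k-1}$ (and P2's inability to avoid $X_{k-1}$ at her states in $X_k$) forces a play reaching $S \times F$ in at most $k$ steps.

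The crux is to show that $W_2 := (S \times Q) \setminus W_1$ is exactly P2's winning region. Since $W_1$ is a fixed point of $X \mapsto X \cup \pre_1(X)$, every P1 state outside $W_1$ has all actions leading outside $W_1$ (otherwise it would lie in $\pre_1(W_1) \subseteq W_1$), and every P2 state outside $W_1$ has at least one action leading outside $W_1$. I would then exhibit a memoryless P2 strategy that at each $(s,q) \in W_2 \cap S_2 \times Q$ chooses an action witnessing the second property, and verify by induction on the play length that, against any P1 strategy, the token starting in $W_2$ never leaves $W_2$, hence never reaches $S \times F$.

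The main obstacle I anticipate is this closure-under-P1 step for $W_2$: because P1 chooses adversarially at P1-states in $W_2$, I must argue that \emph{no} P1 action from a P1-state in $W_2$ lands in $W_1$, which is precisely the contrapositive of the fixed-point property above. Once this bookkeeping is settled, each state of $G \otimes \calA$ is winning for exactly one player, and projecting the strategies back through the product yields determinacy of $\game$, with memoryless winning strategies available to both players as a side benefit.
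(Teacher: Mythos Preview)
Your attractor construction is correct and is the standard elementary proof of determinacy for reachability games. The paper, however, does not prove this theorem at all: it states it as a known result and defers to the literature \citep{mcnaughton1993infinite,Zielonka1998Infinite}. So your proposal supplies strictly more than the paper does.

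Two remarks on the comparison. First, the theorem as phrased in the paper asserts determinacy for \emph{all} two-player zero-sum deterministic turn-based graph games, which in the cited references covers $\omega$-regular (parity, B\"uchi, Rabin) objectives, not only reachability. Your reduction to a reachability game on $G\otimes\calA$ is valid precisely because the chapter has already restricted attention to \ac{scltl}; for a general \ac{ltl} objective the product automaton would carry a parity or Rabin condition and your single fixed point would have to be replaced by a Zielonka-style nested recursion. So you are proving exactly the instance the chapter actually uses, not the theorem as literally stated. Second, what your direct argument buys over a bare citation is an explicit construction of memoryless winning strategies for both players on the product arena---something the paper later relies on implicitly when it defines the permissive set-based strategies $\pi_1^{P2}$ and $\pi_2^{P2}$ in \eqref{eq:approximate-asw1} and \eqref{eq:approximate-asw}. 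Making that construction explicit is a genuine gain in self-containment.
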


Thm.~\ref{thm:determinancy} is a very important result because it provides us with a characterization of the game state-space. It states that, \textit{at any state in the game transition system, either the defender or the attacker has a winning strategy.} In other words, the state space of game transition system is divided into two sets, one consisting of states from which the defender is guaranteed to satisfy his security objectives, and the second consisting of states from which attacker has a strategy to violate the defender's objectives.

\begin{example}
We illustrate the game on graph using a toy example. Consider a network system where the defender can switch between two network topologies, giving rise to two attack graph under two network topologies (Shown in Fig.~\ref{fig:attackgraphAB}). For simplicity, as the graph is deterministic, we omit the attack action labels on the graph. 
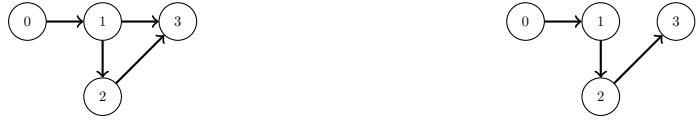
\begin{figure}[ht]
\begin{subfigure}[b]{0.5\textwidth}
    \centering
    \begin{tikzpicture}[align=center,  node distance=2cm, scale=0.5, transform shape]
    \node[main node] (0) {$0$};
    \node[main node] (1) [right of =0]  {$1$};
    \node[main node] (2) [below of =1] {$2$};
    \node[main node] (3) [right of =1] {$3$};
    
    \path[->,draw,thick]
    
    (0) edge node {} (1)
    (1) edge node {} (2)
    (1) edge node {} (3)
    (2) edge node {} (3);
\end{tikzpicture}
    \caption{Attack graph under topology A.}
    \label{fig:attackgraphA}
    \end{subfigure}
    \begin{subfigure}[b]{0.5\textwidth}
    \centering
    \begin{tikzpicture}[align=center,  node distance=2cm, scale=0.5, transform shape]
    \node[main node] (0) {$0$};
    \node[main node] (1) [right of =0]  {$1$};
    \node[main node] (2) [below of =1] {$2$};
    \node[main node] (3) [right of =1] {$3$};
    
    \path[->,draw,thick]
    
    (0) edge node {} (1)
    (1) edge node {} (2)
    (2) edge node {} (3);
\end{tikzpicture}
    \caption{Attack graph under topology B.}
    \label{fig:attackgraphB}
    \end{subfigure}
    \caption{The attack graphs under different network topologies.}
    \label{fig:attackgraphAB}
\end{figure}
Incorporating defender's actions into the attack graph, we obtain the arena of the game, shown in Fig.~\ref{fig:attack-defend-game}. A circle state $(0,A)$ can be understood as the attacker is at node $0$,   the network configuration is $A$, and it is attacker's turn to make a transition. A square state $(1,A)$ can be understood as the attacker is at node $1$,   the network configuration is $A$, and it is defender's turn to make a switch. A transition from circle $(0,A) $ to square $(1,A)$ means that the attacker exploits a vulnerability on node $1$ and reach node $1$. The goal of the attacker is node $3$. That is, if the attacker can reach any of the square states $(3,A)$ or $(3,B)$, then she wins the game. The goal of the defender is to prevent the attacker from reaching the goal. In this game, we can compute the attacker's strategy shown in Fig.~\ref{fig:attack-defend-game} where red, dashed edges indicate the choice of attacker. For example, if the attacker is at host $1$ given topology $B$, she reaches host $2$. If the defender switches to $A$, then she will take action to reach square $(3,A)$. If the defender switches to $B$, then she will take action to reach square $(3,B)$. In this game, there is no winning strategy for the defender given the initial state of the game. In fact, the winning region of the defender is empty.

\begin{figure}
    \centering
        \begin{tikzpicture}[->,>=stealth',shorten >=1pt,auto,node distance=2cm,
                            semithick, scale=0.65, transform shape,  square/.style={regular polygon,  regular polygon sides=4}]
        \tikzstyle{every state}=[fill=white]
        \node[initial,state]   (0A0)                      {$0,A$};
        \node[square,draw]           (1A1) [right  of=0A0]   {$1,A$};
        \node[state]   (1A0)  [above right of= 1A1,xshift=1cm, yshift = 1cm]                    {$1,A$};
        \node[state]   (1B0)   [below right of= 1A1,xshift=1cm, yshift=-1cm]                     {$1,B$};
        \node[square, draw] (2A1) [right of = 1A0, yshift=-2cm] {$2,A$};
          \node[state]   (2A0)   [right of= 2A1]                     {$2,A$};
        \node[square, draw,double ] (3A1) [right  of = 1A0, node distance=4cm] {$3,A$};
        \node[square, draw] (2B1) [right of = 1B0] {$2,B$};
        \node[state] (2B0) [right of = 2B1] {$2,B$};
        \node[square, draw,double] (3B1) [right of = 2B0] {$3,B$};
        \path[->]   
         (0A0) edge[red, dashed]   node    { }           (1A1)
         (1A1) edge[black]   node         { }  (1A0)
         (1A1) edge[black]   node         {}  (1B0)
          (1A0) edge[red, dashed]   node         {}  (2A1)
          (1A0) edge[red, dashed]   node         {}  (3A1)
          (2A1) edge[black]   node         { }  (2A0)
          (1B0) edge[red, dashed] node {} (2B1)
        (2B1) edge[black] node { } (2B0)
                (2B1) edge[black] node { } (2A0)
        (2B0) edge[red, dashed] node {} (3B1)
        (2A1) edge[black] node { } (2B0)
                 (2A0) edge[red, dashed] node {} (3A1);
                \end{tikzpicture}
    \caption{The game transition system given topology switching with simple attacker's reachability objective.}
    \label{fig:attack-defend-game}
\end{figure}
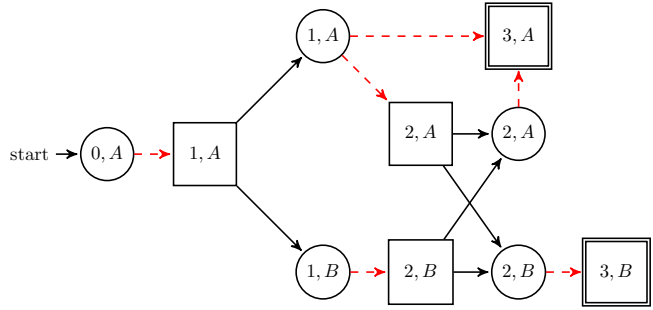

Now, let's consider a different specification of the attacker: $\Eventually (2\land \Eventually 3)$. That is, the attacker must reach node 2 first and then node 3. The \ac{ltl} formula translates to \ac{dfa} in Fig.~\ref{fig:automata}.  Given the new specification, we construct the game transition system in Fig.~\ref{fig:product}. An example of transition $(1,A,q_0,\mbox{circle})\rightarrow (2,A,q_1,\mbox{square})$, where $\mbox{circle}, \mbox{square}$ indicate the shapes of the nodes, is defined jointly by 
$
(1,A,\mbox{circle})\rightarrow(2,A,\mbox{square}),
$ 
and $q_0\xrightarrow{2} q_1$. Given this \ac{ltl} task, the winning strategy of the attacker is indicated with red and dashed edges. It is noted that when the attacker is at node $1$, she will not choose to reach $3$ but to reach $2$, required by the new specification.

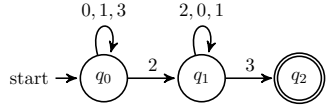
\begin{figure}
    \centering
        \begin{tikzpicture}[->,>=stealth',shorten >=1pt,auto,node distance=2cm,
                            semithick, scale=0.65, transform shape]
        \tikzstyle{every state}=[fill=white]
        \node[initial,state]   (0)                      {$q_0$};
        \node[state]           (1) [ right of=0]   {$q_1$};
        \node[state,accepting]           (2) [right  of=1]   {$q_2$};
         \path[->]
        (0) edge              node        {$2$}       (1)
        (0) edge [loop above] node        {$0,1,3$}       (0)
         (1) edge [loop above] node        {$2,0,1$}       (1)
        (1) edge  node        {$3$}       (2);
        \end{tikzpicture}
    \caption{The automaton representing the attacker's objective}
    \label{fig:automata}
\end{figure}

\begin{figure}
    \centering
        \begin{tikzpicture}[->,>=stealth',shorten >=1pt,auto,node distance=2.5cm,
                            semithick, scale=0.7, transform shape, 
                             square/.style={rectangle}]
        \tikzstyle{every state}=[fill=white]
        \node[initial,ellipse,draw]   (0A0)                      {$0,A,q_0$};
        \node[square,draw]           (1A1) [right  of=0A0]   {$1,A,q_0$};
        \node[ellipse,draw]   (1A0)  [above right of= 1A1]                    {$1,A,q_0$};
        \node[ellipse,draw]   (1B0)   [below right of= 1A1]                     {$1,B,q_0$};
        \node[square, draw] (2A1) [right of = 1A0, yshift=-2cm] {$2,A,q_1$};
          \node[ellipse,draw]   (2A0)   [right of= 2A1]                     {$2,A,q_1$};
        \node[square, draw ] (3A1) [right  of = 1A0, node distance=4.5cm] {$3,A,q_0$};
         \node[square, draw,double ] (3A1q2) [right  of = 2A0] {$3,A,q_2$};
        \node[square, draw] (2B1) [right of = 1B0] {$2,B,q_1$};
        \node[ellipse,draw] (2B0) [right of = 2B1] {$2,B,q_1$};
        \node[square, draw,double] (3B1) [right of = 2B0] {$3,B,q_2$};
        \path[->]   
         (0A0) edge[red, dashed]   node    { }           (1A1)
         (1A1) edge[black]   node         { }  (1A0)
         (1A1) edge[black]   node         {}  (1B0)
          (1A0) edge[red, dashed]   node         {}  (2A1)
          (1A0) edge  node         {}  (3A1)
          (2A1) edge[black]   node         { }  (2A0)
          (1B0) edge[red, dashed] node {} (2B1)
        (2B1) edge[black] node { } (2B0)
                (2B1) edge[black] node { } (2A0)
        (2B0) edge[red, dashed] node {} (3B1)
        (2A1) edge[black] node { } (2B0)
                  (2A0) edge[red, dashed] node {} (3A1q2);
                \end{tikzpicture}
    \caption{The game transition system for \ac{ltl} co-safe formula  $\Eventually (2\land \Eventually 3)$. The red edges are the attacker's strategy.}
    \label{fig:product}
\end{figure}

\end{example}

\subsection{Synthesis of Reactive Defense Strategies with Cyber Deception}

When the attacker has a one-sided misperception of labeling function, as defined in Def.~\ref{def:one-sided-misperception}, the defender \textit{might} strategically utilize this misperception to deceive the attacker into choosing a strategy that is advantageous to the defender. To understand when the defender might have such a deceptive strategy  and how to compute it, we study the solution concept of hypergame.


\paragraph*{Solution Approach} A hypergame $\hgame^2 = \langle \hgame^1, \game_2 \rangle$ is defined using two games, namely $\game_1$ and $\game_2$. Under one-sided misperception of labeling function, defender is aware of both games. Therefore, to synthesize a deceptive strategy, the defender must take into account the strategy that the attacker will use, based on her misperception. That is, the defender must solve two games: Game $\game_2$ to identify the set of states in the game transition system $G_2 \otimes \calA$ that the attacker perceives as winning for her under labeling function $L_2$, and game $\game_1$ to identify the set of states in the game transition system $G_1 \otimes \calA$   that are winning for the defender under (ground-truth) labeling function $L = L_1$. After solving the two games, the defender can integrate the solutions to obtain a set of states, at which the attacker makes mistakes due to the difference between $L_2$ and $L$. Let us introduce a notation to denote these sets of winning states.



\begin{itemize}
    \item $\game_1$: P1's winning region is $\win_1 \subseteq S\times Q$ and P2's winning region is $\win_2 \subseteq S\times Q$.
    \item $\game_2$: P1's winning region is $\win_1^{P2} \subseteq S\times Q$ and P2's winning region is $\win_2^{P2} \subseteq S\times Q$.
\end{itemize}
 
 Figure~\ref{fig:perceptual-games} provides a conceptual representation partitions of the state-space of a game transition system. Due to misperception, the set of states are partitioned into the following regions, 
\begin{itemize}
    \item $\win_1$: is a set of states from which P1 can ensure satisfaction of security objectives, even if P2 has complete and correct information. 
    Thus, P1 can take the winning strategy $\pi_1$.
    \item $\win_1^{P2} \cap \win_2$: is a set of states where P2 is truly winning, but perceives the states to be losing for her; due to misperception.
    Thus, P2 may either give up the attack mission or play randomly.
    \item $\win_2^{P2}\cap \win_2$:  is a set of states in which P2 is truly winning and perceives those states to be winning. 
    In this scenario, she will carry out the winning strategy $\pi_2^{P2}$. However,  this strategy can be different from the true winning strategy $\pi_2$ that P2 should have played if she had complete and correct information. This difference creates unique opportunities for P1 to enforce security of the system.
\end{itemize}

\begin{figure}
    \centering
    \includegraphics[width=0.6\textwidth]{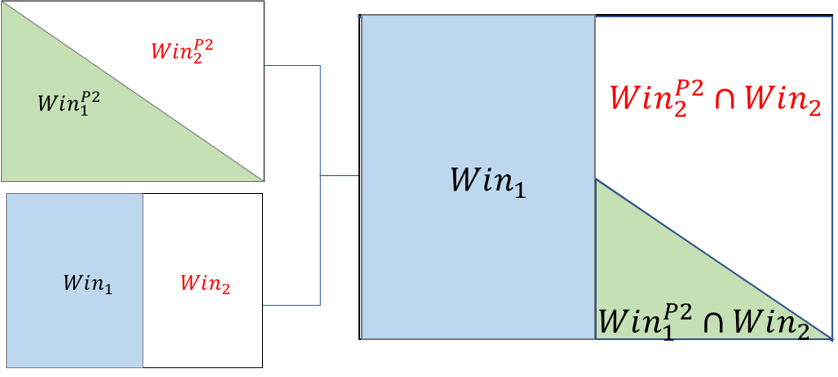}
    \caption{Illustration of the partition given by different perceptual game.}
    \label{fig:perceptual-games}
\end{figure}

To compute deceptive strategy, the defender must reason about (a) how the attacker responds given her perception? and (b) how does the attacker expect the defender to respond, given her perception? 
It is noted that given the winning regions $\win_1^{P2}$ (resp. $\win_2^{P2}$), there exists more than one  strategies that P2 perceives to be winning for P2 (resp. for P1). The problem is to compute a strategy for the defender $\pi_1^\ast$, if exists, such that \emph{no matter which} perceptual winning strategy that P2 selects,  P1 can ensure the security specification is satisfied surely, without contradicting the perception of P2.
  
Given P2's perceptual game $\game_2 =(S\times Q, A, \Delta, (s_0, q_{0,2}), S\times F)$, there can be infinitely many such almost-sure winning mixed
         strategies for P2 \citep{bernetPermissiveStrategiesParity2002}, we take an approximation of the \emph{set} of
         almost-sure winning strategy as a \emph{memoryless set-based
           strategy} as follows.
      \begin{equation}
         \label{eq:approximate-asw}
        \pi^{P2}_2(s,q) = \{a\mid \Delta_2\left((s,q),a\right)\in \win_2^{P2} \}.
      \end{equation}
      In other words, P2 can select any action as long as she can stay within her perceived winning region $\win_2^{P2}$. Given the rational player 2, for a given state $(s,q)$, an action $a $ that is not in $\pi^{P2}_2(s,q)$ is \emph{irrational} as it drives P2 from the perceived sure winning region to the perceived losing region.

      For a state  $(s,q)\in \win_1^{P2}\cap \win_2$, P2 perceives P1 to be winning under labeling function $L_2$, when she is truly winning under ground-truth labeling function $L$. In this case, P1's deceptive strategy should conform to P2's perceptual winning strategy for P1. Otherwise, P2 would know that she is misperceiving the game when she observes P1 deviating from his rational behavior in the perceptual game of P2. 
      When P1's action is inconsistent from what P2 perceives P1 should do, then P2 knows that she have misperception about the game.

      Again,  we take an approximation of the \emph{set} of P1's
         almost-sure winning strategy perceived by P2 as a \emph{memoryless set-based
           strategy} as follows.
      \begin{equation}
         \label{eq:approximate-asw1}
        \pi^{P2}_1((s,q)) = \{a\mid \Delta_2((s,q),a )\in \win_1^{P2} \}.
      \end{equation}

      Next, by removing P2's  actions from $\game_1$ that P2 perceives to be irrational as well as P1's actions that contradicts P2's perception, we obtain a different game.
      Now, we incorporate the knowledge of P1's actions that P2 would perceive to be irrational into the hypergame model. This results in a modified hypergame model as defined below. 
      
     \begin{definition}
     \label{def:hypergame}
     Given the games $\game_1=G_1\otimes \calA  $ constructed using the true labeling function $L$ and  $\game_2=G_2\otimes \calA  $ with P2's misperceived labeling function $L_2$, the \emph{deceptive sure-winning strategy of P1} is the sure-winning strategy of the following game:
     \[
     \hgame = (S\times Q\times Q, A, \bar \Delta, (s_0,q_0, p_0), \win_1 \times Q),
     \]
     where the transition function $\bar \Delta$ is defined such that 
     \begin{itemize}
         \item   For $(s,q,p)\in S_1\times Q \times Q\setminus (\win_1\times Q)$, if $ (s,p)\in \win_1^{P2}$, then actions in $\pi^{P2}_1(s,p) $ 
        are enabled. Otherwise, all actions $a\in A_1$ are enabled. For each enabled action $a$, let 
         $\bar \Delta((s,q,p),a) =(s',q',p')$ where $s' = T(s, a)$, $q'= \delta(q, L(s'))$ and $p'= \delta(p, L_2(s'))$. 
         \item      For $(s,q,p)\in S_2\times Q \times Q\setminus (\win_1\times Q)$, if $(s,p)\in \win_2^{P2} $, then actions in $\pi^{P2}_2(s,p) $ 
        are enabled. Otherwise, all actions $a\in A_2$ are enabled. For each enabled action $a$, let $\bar \Delta((s,q,p),a) =(s',q',p')$ where $s' = T(s, a)$, $q'= \delta(q, L(s'))$ and $p'= \delta(p, L_2(s'))$.
        \item The initial state is $(s_0,q_0,p_0)$ where $(s,q_0)$ is the initial state in $\game_1$ and $(s_0, p_0) $ is the initial state in $\game_2$.
     \end{itemize} 
     \end{definition}

The transition function can be understood as follows. At a P1 state, when P2 perceives a state $(s, q, p)$ to be winning for P1, the permissive actions in $\pi_1^{P2}$ of P1  are enabled at that state. Otherwise, P2 would assume that P1 may choose any action from $A_1$. Similarly, at a P2 state, when P2 perceives a state to be winning for herself, she might choose any action from her permissive action set $\pi_2^{P2}$. Whereas, if P2 perceives the current state $(s, q, p)$ to be losing for her, given her perception, she would choose any action from $A_2$.

\begin{lemma}
The sure-winning strategy $\pi_1^\ast$ of the game $\hgame$ in Def.~\ref{def:hypergame} is stealthy  for any state $(s,q,p)$ where $(s,q)\in \win_2$ as it   does not reveal any information with which P2 can deduce that some misperception exists.
\end{lemma}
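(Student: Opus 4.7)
The plan is to show that every action prescribed by $\pi_1^\ast$ at a reachable P1-state lies in the set of P1-actions that P2 regards as rational in her perceptual game $\game_2$. If this holds along every play induced by $\pi_1^\ast$, then P2 cannot observe any P1-move that is inconsistent with her model $\game_2$, and therefore cannot deduce that the actual labeling $L$ differs from $L_2$. Formally, I would define stealthiness as: along every play consistent with $\pi_1^\ast$ starting from the state $(s,q,p)$, each P1-move is compatible with an almost-sure winning strategy of P1 in $\game_2$ from P2's perspective.

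First, I would exploit the construction of $\bar\Delta$ in Def.~\ref{def:hypergame}. At every P1-state $(s,q,p)$ with $(s,p)\in \win_1^{P2}$, the enabled actions are exactly those in $\pi^{P2}_1(s,p)$ --- the permissive almost-sure winning set for P1 as perceived by P2. Because $\pi_1^\ast$ is a strategy in $\hgame$, it can only select among enabled actions, so its choices at such states are by construction consistent with P2's model of rational P1-play. At P1-states $(s,q,p)$ with $(s,p)\notin \win_1^{P2}$, determinacy of $\game_2$ (Thm.~\ref{thm:determinancy}) yields $(s,p)\in \win_2^{P2}$, so P2 perceives herself to be winning regardless of what P1 does; hence no P1-action can surprise her at such a state either.

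The restriction $(s,q)\in \win_2$ is what makes the claim non-vacuous. Outside $\win_2$ the defender already has a winning strategy in $\game_1$ without exploiting misperception, so stealth is immaterial there. Inside $\win_2$, however, the defender's strategy $\pi_1^\ast$ genuinely relies on P2's faulty model of the labeling, and the case analysis above guarantees that nothing in P1's observable behavior contradicts that model. A symmetric remark for P2-states confirms that the transitions taken by P2 along such plays are themselves in $\pi^{P2}_2(s,p)$, which is exactly the behavior P2 would exhibit in $\game_2$; so the entire play remains in the support of what P2 would expect in her perceptual game.

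The main obstacle I expect is pinning down the precise meaning of ``does not reveal information'' --- that is, choosing an observation function for P2 over plays of $\game_2$ and an appropriate indistinguishability relation, then explicitly constructing, for each prefix of a play of $\hgame$ consistent with $\pi_1^\ast$, a witness strategy $\hat\pi_1$ for P1 in $\game_2$ whose plays reproduce the same observation. Once this witness is available, the stealthiness claim follows by an inductive argument on play length using the two cases above.
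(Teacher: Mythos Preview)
Your proposal is correct and follows essentially the same two-case argument as the paper: at P1-states with $(s,p)\in\win_1^{P2}$ the construction of $\bar\Delta$ restricts enabled actions to $\pi_1^{P2}(s,p)$, and at P1-states with $(s,p)\in\win_2^{P2}$ any P1-action is consistent with P2's belief that she is winning. Your treatment is in fact more careful than the paper's own proof, which does not formalize the observation/indistinguishability relation you flag as an obstacle and simply asserts the conclusion directly from the two cases.
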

\begin{proof}
 In P2's perceived winning region $\win_2^{P2}$ for herself, any strategy of P1 is losing. Thus $\pi_1^\ast$ will not contradict P2's perception. In P2's perceived winning region $\win_1^{P2}$ for P1, an action, which is selected by $\pi_1^\ast$ (if defined for that state), ensures that P1 to stay within $\win_1^{P2}$ and thus will not contradict the perception of P2. In both cases, P2 will not deduce the fact that there is a misperception.
\end{proof}
\begin{example}[Continued]
Let us continue with the toy example and the simple reachability objective $\Eventually 3$ (eventually reach node 3). Suppose the node $2$ is a decoy. Then the attacker's labeling function $L_2$ differs from the defender's labeling function $L$: $L_2((2,X))= \emptyset$ and $L((2,X)) = \mbox{decoy}$ where $X \in \{A,B\}$. As the attacker is to avoid reaching decoys, if she knows the true labeling $L$, then at the attacker's circle state $(1,A)$ in Fig.~\ref{fig:attack-defend-game-v2}, she will not choose to reach node $2$ $($and thus the square state $(2,A))$. The attacker's strategy given the true labeling function is given in Fig.~\ref{fig:attack-defend-game-v2}. In this game, the attacker has no winning strategy to reach $3$ from $0$ in the true game as the defender can choose to switch to topology $B$ $((1,A,\mbox{square})\rightarrow (1,B,\mbox{circle}))$. However, with the misperception on the labeling function, the attacker believes that she has a winning strategy from node $0$ (see the attacker's strategy in Fig.~\ref{fig:attack-defend-game}).
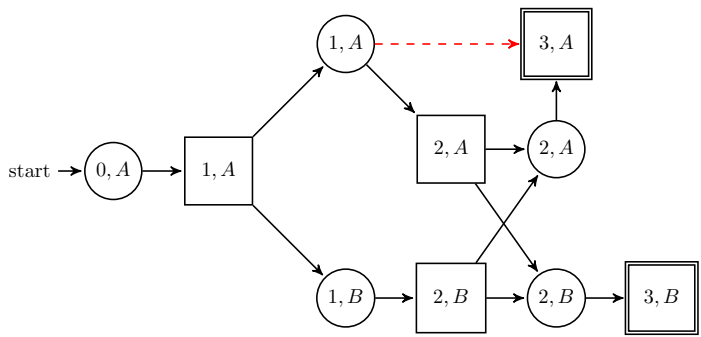
\begin{figure}[ht]
    \centering
        \begin{tikzpicture}[->,>=stealth',shorten >=1pt,auto,node distance=2cm,
                            semithick, scale=0.7, transform shape,  square/.style={regular polygon,  regular polygon sides=4}]
        \tikzstyle{every state}=[fill=white]
        \node[initial,state]   (0A0)                      {$0,A$};
        \node[square,draw]           (1A1) [right  of=0A0]   {$1,A$};
        \node[state]   (1A0)  [above right of= 1A1,xshift=1cm, yshift = 1cm]                    {$1,A$};
        \node[state]   (1B0)   [below right of= 1A1,xshift=1cm, yshift=-1cm]                     {$1,B$};
        \node[square, draw] (2A1) [right of = 1A0, yshift=-2cm] {$2,A$};
          \node[state]   (2A0)   [right of= 2A1]                     {$2,A$};
        \node[square, draw,double ] (3A1) [right  of = 1A0, node distance=4cm] {$3,A$};
        \node[square, draw] (2B1) [right of = 1B0] {$2,B$};
        \node[state] (2B0) [right of = 2B1] {$2,B$};
        \node[square, draw,double] (3B1) [right of = 2B0] {$3,B$};
        \path[->]   
         (0A0) edge  node    { }           (1A1)
         (1A1) edge[black]   node         { }  (1A0)
         (1A1) edge[black]   node         {}  (1B0)
          (1A0) edge   node         {}  (2A1)
          (1A0) edge[red, dashed]   node         {}  (3A1)
          (2A1) edge[black]   node         { }  (2A0)
          (1B0) edge  node {} (2B1)
        (2B1) edge[black] node { } (2B0)
                (2B1) edge[black] node { } (2A0)
        (2B0) edge  node {} (3B1)
        (2A1) edge[black] node { } (2B0)
                 (2A0) edge  node {} (3A1);
                \end{tikzpicture}
    \caption{The game transition system and the attacker's winning strategy given that she is to reach node $3$ and knows that node $2$ is decoy.}
    \label{fig:attack-defend-game-v2}
\end{figure}

Consider now the initial state is $(1,A,\mbox{circle})$, that is, the attacker has compromised node $1$ and the current network topology is $A$. If the attacker knows $2$ is a decoy, then she will choose to reach $3$ deterministically. If the attacker does not know $2$ is a decoy, then she is indifferent to reaching node $2$ or node $3$, because in her  perception, these two actions ensures that with probability one, she can reach node $3$ in finitely many steps. Given this analysis, it is not difficult to see that $(1,A, \mbox{circle})$ is sure-winning for the attacker given the true game, but positive winning for the defender given the perceptual game of the attacker, with the incorrect labeling. Here, positive winning means that the defender wins with a positive probability--that is, the probability when the attacker makes mistakes (visiting decoy node $2$) due to her misperception.

Finally, we construct the hypergame $\hgame$ in Fig.~\ref{fig:hypergame-ex} where the defender's objective is $\neg p \until \mbox{decoy}$ where $p$ is an atomic proposition that evaluates true when node $3$ is compromised. The labeling functions are: For $X\in \{A,B\}$,  $L (1,X)= L_2(1,X)=\emptyset$, $L (3,X)= L_2(3,X)=\{p\}$ and $L (2,X)= \mbox{decoy}$ but $L_2(2,X)=\emptyset$. The states (shaded, red) are when the attacker's perceived automaton  state  differs from the defender's automaton state.  For example, $((2,A),q_2,q_0,\mbox{circle})$ means that the defender knows that the attacker reached a decoy but the attacker is unaware of this fact.
\begin{figure}
\begin{subfigure}[b]{0.2\textwidth}
    \centering
        \begin{tikzpicture}[->,>=stealth',shorten >=1pt,auto,node distance=2cm,
                            semithick, scale=0.5, transform shape]
        \tikzstyle{every state}=[fill=white]
        \node[initial,state]   (0)                      {$q_0$};
        \node[state]           (1) [ right of=0]   {$q_1$};
        \node[state,accepting]           (2) [ below right of=0]   {$q_2$};
         \path[->]
        (0) edge              node        {$p$}       (1)
        (0) edge [loop above] node        {$\emptyset$}       (0)        
        (0) edge [right]   node        {$\mbox{decoy}$}       (2)
        (2) edge [loop right] node {$\top$} (2)
        (1) edge [loop above] node {$\top$} (1);
        \end{tikzpicture}
    \caption{}
    \label{fig:dfa_simple}
    \end{subfigure}
    \begin{subfigure}[b]{0.79\textwidth}
        \begin{tikzpicture}[->,>=stealth',shorten >=1pt,auto,node distance=3.5cm,
                            semithick, scale=0.5, transform shape,  square/.style={rectangle}]
        \tikzstyle{every state}=[fill=white,draw,ellipse]
        \node[initial,draw,ellipse]   (0A0)                      {$(0,A),q_0,q_0$};
        \node[square,draw]           (1A1) [right  of=0A0]   {$(1,A),q_0,q_0$};
        \node[draw,ellipse]   (1A0)  [above right of= 1A1 ]                    {$(1,A),q_0,q_0$};
        \node[draw,ellipse]   (1B0)   [below right of= 1A1 ]                     {$(1,B),q_0,q_0$};
        \node[square, draw,  fill=red!10] (2A1) [below right of = 1A0,xshift=1cm] {$(2,A), q_2, q_0$};
          \node[draw,ellipse,  fill=red!10]   (2A0)   [right of= 2A1]                     {$(2,A),q_2,q_0$};
                  \node[square, draw , fill=red!10] (3A1q2) [right  of = 2A0] {$(3,A),q_2,q_1$};

        \node[square, draw ] (3A1q1) [  right  of = 1A0] {$(3,A),q_1,q_1$};
        \node[square, draw, fill=red!10] (2B1) [right of = 1B0] {$(2,B), q_2,q_0$};
        \node[draw,ellipse, fill=red!10] (2B0) [right of = 2B1] {$
        (2,B),q_2,q_0$};
        \node[square, draw,  fill=red!10] (3B1) [right of = 2B0] {$(3,B),q_2,q_0$};
        \path[->]   
         (0A0) edge [red, dashed]   node    { }           (1A1)
         (1A1) edge[black]   node         { }  (1A0)
         (1A1) edge[black]   node         {}  (1B0)
          (1A0) edge [red,dashed]  node         {}  (2A1)
          (1A0) edge [red, dashed]  node         {}  (3A1q1)
          (2A1) edge[black]   node         { }  (2A0)
          (1B0) edge[red, dashed] node {} (2B1)
        (2B1) edge[black] node { } (2B0)
                (2B1) edge[black] node { } (2A0)
        (2B0) edge[red, dashed] node {} (3B1)
        (2A1) edge[black] node { } (2B0)
                 (2A0) edge[red, dashed] node {} (3A1q2);
                \end{tikzpicture}
    \caption{}
    \end{subfigure}
    \caption{(a)The \ac{dfa} for the defender's objective $\neg  p\until \mbox{decoy}$. (b) The game that P1 uses to compute deceptive sure-winning strategy. The red and dashed edges are perceived winning actions of the attacker.}
    \label{fig:hypergame-ex}
\end{figure}
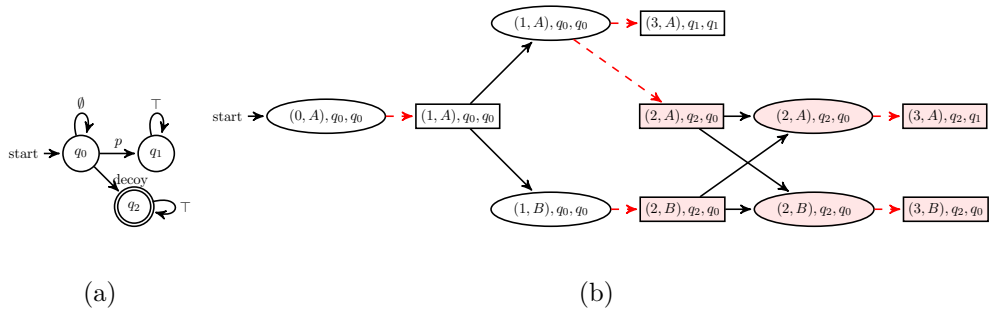
\end{example}

\section{Case Study}

We consider a simple network system illustrated in Fig.~\ref{fig:network-ex}. 
\begin{figure}[ht]
    \centering
    \includegraphics[width=0.5\textwidth]{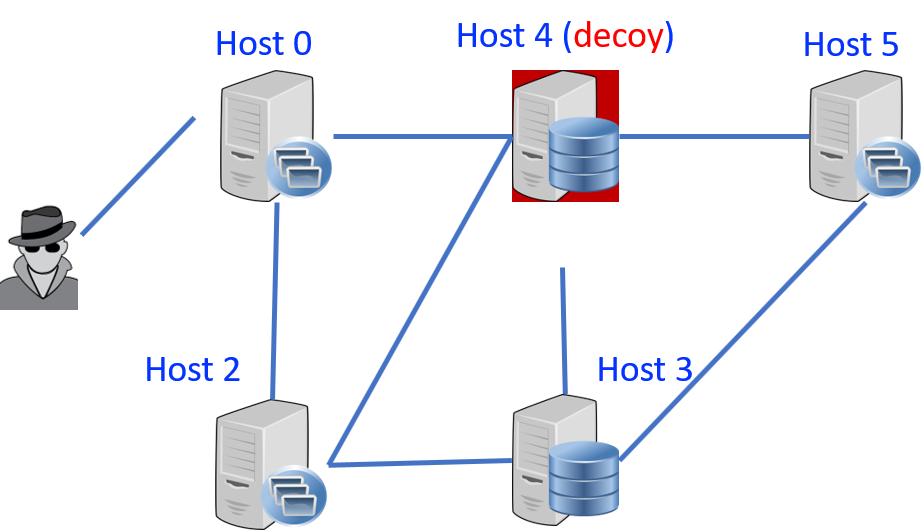}
    \caption{A example of network system.}
    \label{fig:network-ex}
\end{figure}

\begin{table}
\centering
\caption{The pre- and post-conditions of vulnerabilities.\label{tb:vuldefine}}{%
\begin{tabular}{||c|l||}
\toprule
Vulnerability ID &  Pre- and Post- Conditions\\
\midrule
0 & Pre : $c\ge 1$, service $0$ running on target host, \\
  &Post : $c=2$, stop service $0$ on the target, reach target host.\\
\hline
1 & Pre: $c\ge 1$, service $1$ running on the target host, \\
  & Post : reach target host. \\
\hline
2 & Pre: $c\ge 1$, service $2$ running on the target host \\
     & Post : $c=2$,  reach target host.\\
\botrule
\end{tabular}}{}
\end{table}

\begin{table}
\caption{The defender's options.
\label{tb:networkstatus_simple}}{%
\begin{tabular}{@{}ccc@{}}
\toprule
Host ID & Services & Non-critical Services  \\
\midrule
0 & $\{0,1,2\} $ & $\emptyset$ \\
 1& $\{ 1,0\}$ &$\emptyset$\\
 2& $\{1,2\}$ & $\emptyset$\\
 3 & $\{0,1,2\}$ &$\{0,1\}$\\
 4 & $\{0,1\}$ &$\emptyset$\\
 5 & $\{0,1,2\}$ &$\emptyset$\\
\botrule
\end{tabular}}{}
\end{table}



In this network, each host runs a subset $\servs = \{0,1,2\}$ of services. A user in the network can have one of the three login credentials $\mbox{credentials}= \{0,1,2\}$ standing for ``no access'' (0), ``user'' (1), and ``root'' (2). There are a set of vulnerabilities in the network, each of which is defined by a pre-condition and a post-condition. The pre-condition is a Boolean formula that specifies the set of logical properties to be satisfied for an attacker to exploit the vulnerability instance. The post-condition is a Boolean formula that specifies the logical properties that can be achieved after the attacker has exploited that vulnerability.  The set of vulnerabilities are given in Table~\ref{tb:vuldefine} and  are generated based on the vulnerabilities described in \citep{Jha2002Two}.

The defender can temporally suspend noncritical services from servers. To incorporate this defense mechanism, we assign each host a set of noncritical services that can be suspended from the host.  In Table~\ref{tb:networkstatus_simple}, we list the set of services running on each host, and a set of noncritical services that can be suspended by the defender. Other defenses can also be considered. For example, if the network topology can be reconfigured online, then the state in the game arena should keep track of the current topology configuration of the network. In our experiment, we consider simple defense actions. However, our method extends to more complex defense mechanisms.

The attacker, at a given
attacker's state, can exploit any existing vulnerability on the
current host.  The defender, at a defender's state, can choose to suspend
a noncritical service on any host in the network. 
The attacker's objective is expressed using \ac{scltl} formula 
\[
\varphi_2 = \neg\mbox{decoy}\until p_2 \land \neg \mbox{decoy} \until  p_5,
\]
where $p_i$ means that the attacker has compromised host $i$ and gained user or root access on that machine.
However, the attacker does not know the location of the decoys. In this system, decoy is host $4$. 

The following result is obtained from hypergame analysis: at the initial state, the attacker is at host $0$ with user access.
\begin{itemize}
\item The initial state is perceived to be winning by the attacker. 
\item Assuming complete, symmetric information, the size of winning region for the defender is 131.
    \item With asymmetric information, when the attacker plays a perceived winning strategy, the size of winning region for the defender is 193--which is greater than that with symmetric information.
    \item The defender has a winning strategy to prevent the attacker from achieving her objective in the network, using the solution of the hypergame.
    \end{itemize}
    
It is noted that, using the winning region of the defender, we know for a given initial state, whether the security specification is satisfied. For a different initial state, for example, the attacker visited host 2 with user privileges, we can directly examine whether the security is ensured by checking if the new initial state is in the winning region for the defender. This set provide us important insight to understand the weak points in the network, and can be used for guide the decoy allocation.

\section{Conclusion}
The goal of formal synthesis is to design dynamic  defense with guarantees on critical security specifications in a cyber network. 
In this chapter, we introduced a game on graph model for capturing the attack-defend interactions in a cyber network for reactive defense, subject to security specifications in temporal logic formulas. In reactive defense, the defender can take actions in response to the exploit actions of the attacker. We introduced a hypergame for games on graphs to capture payoff misperception for the attacker caused by the decoy systems. The solution concept of hypergames enables us to synthesize effective defense strategy given the attacker's misperception of the game, without contradicting the belief of the attacker. There are multiple extensions from this study: The framework assumes asymmetric information but complete observations for both defender and attacker. It is possible to extend for defense design with a partially observable defender/attacker. By examining the winning region, it provides important insights for the resource allocation of decoy systems.

\backmatter

\bibliographystyle{plainnat}
\bibliography{refs}%



\printindex

\end{document}